\newtheorem{theorem}{Theorem}
\newtheorem{lemma}{Lemma}
\newtheorem{remark}{Remark}
\newtheorem{definition}{Definition}
\newcommand{\RR}{\mathbb{R}}
\begin{document}

\title{Outer Bounds for Multiterminal Source Coding \\ via a Strong Data Processing Inequality}

\author{\IEEEauthorblockN{Thomas A. Courtade}
\IEEEauthorblockA{Department of Electrical Engineering\\
Stanford University\\
Stanford, California, USA\\
Email: courtade@stanford.edu}
\thanks{This work is supported in part by the NSF Center for Science of Information under grant agreement CCF-0939370.}}

\maketitle

\begin{abstract}
 An intuitive outer bound for the multiterminal source coding problem is given.  The proposed bound explicitly couples the rate distortion functions for each source and  correlation measures which derive from a ``strong" data processing inequality.  Unlike many standard outer bounds, the proposed bound is not parameterized by a continuous family of auxiliary random variables, but instead only requires maximizing two ratios of divergences which do not depend on the distortion functions under consideration.
\end{abstract}

\maketitle

\section{Introduction}
We begin with a discussion of the two-encoder quadratic Gaussian source coding problem\footnote{We assume the reader has some familiarity with the multiterminal source coding problem.  For those who are unfamiliar, a formal definition of the problem is given in Section \ref{sec:DefnsResults}.} in order to motivate our main result.  To this end, suppose $X,Y$ are jointly Gaussian -- each with unit variance and correlation $\rho$ -- and distortion is measured under mean square error.  In this setting, 
the set of achievable rate distortion tuples  is given by all $(R_X,R_Y,D_X,D_Y)$ satisfying
\begin{align}
R_X &\geq \frac{1}{2}\log\left(\frac{1}{D_X}\left(1-\rho^2+\rho^22^{-2R_Y}\right)\right)\label{eqn:R1Gauss}\\
R_Y &\geq \frac{1}{2}\log\left(\frac{1}{D_Y}\left(1-\rho^2+\rho^22^{-2R_X}\right)\right)\label{eqn:R2Gauss}\\
R_X+R_Y &\geq \frac{1}{2}\log\left(\frac{(1-\rho^2)\beta(D_X,D_Y)}{2D_XD_Y}\right),\label{eqn:R12Gauss}
\end{align}
where
\begin{align}
\beta(D_X,D_Y)\triangleq1+\sqrt{1+\frac{4\rho^2D_XD_Y}{(1-\rho^2)^2}}.
\end{align}

Long before the converse result was completed in \cite{bib:Wagner2008}, it was known that any $(R_X,R_Y,D_X,D_Y)$ satisfying \eqref{eqn:R1Gauss}-\eqref{eqn:R12Gauss} was achievable.  Indeed, $(R_X,R_Y,D_X,D_Y)$ satisfying \eqref{eqn:R1Gauss}-\eqref{eqn:R12Gauss} correspond to a set of points  in the Berger-Tung achievable region attained by Gausian test channels \cite{bib:BergerLongo1977,bib:Tung1978}.  Moreover, roughly a decade before the sum-rate lower bound \eqref{eqn:R12Gauss} was established in \cite{bib:Wagner2008}, it was proved by Oohama \cite{bib:Oohama1997} that \eqref{eqn:R1Gauss}-\eqref{eqn:R2Gauss} were necessary conditions for $(R_X,R_Y,D_X,D_Y)$ to be achievable.  Thus, in the period  between the publication of \cite{bib:Oohama1997} and \cite{bib:Wagner2008}, ad-hoc lower bounds on the sum-rate could be established as follows.

Noting that the right hand sides of \eqref{eqn:R1Gauss} and \eqref{eqn:R2Gauss} are convex in $R_X$ and $R_Y$, respectively, it is straightforward to establish the necessity of
\begin{align}
R_X +\rho^2R_Y &\geq \frac{1}{2}\log\left(\frac{1}{D_X}\right) \label{eqn:HGR_R1}\\
R_Y +\rho^2R_X &\geq \frac{1}{2}\log\left(\frac{1}{D_Y}\right) \label{eqn:HGR_R2}
\end{align}
in order for $(R_X,R_Y,D_X,D_Y)$ to be achievable.  Indeed, this can be seen by linearizing the RHS of \eqref{eqn:R1Gauss} at $R_Y=0$:
\begin{align}
R_X &\geq \frac{1}{2}\log\left(\frac{1}{D_X}\left(1-\rho^2+\rho^22^{-2R_Y}\right)\right)\\
&\geq \left. \frac{1}{2}\log\left(\frac{1}{D_X}\left(1-\rho^2+\rho^22^{-2R_Y}\right)\right) \right|_{R_Y=0} \notag\\
&+R_Y \cdot \frac{\partial}{\partial R_Y} \left. \frac{1}{2}\log\left(\frac{1}{D_X}\left(1-\rho^2+\rho^22^{-2R_Y}\right)\right) \right|_{R_Y=0}\notag\\
&=\frac{1}{2}\log\left(\frac{1}{D_X}\right) - \rho^2R_Y.
\end{align}
 Thus, a simple sum-rate lower bound in the quadratic Gaussian setting is given by
\begin{align}
R_1+R_2 &\geq \frac{1}{(1+\rho^2)}\left( \frac{1}{2}\log\left(\frac{1}{D_X}\right)+\frac{1}{2}\log\left(\frac{1}{D_Y}\right)\right).\label{eqn:gaussHGR}
\end{align}

In Figure \ref{fig:gauss_p2}, we have compared the lower bound \eqref{eqn:gaussHGR} against the optimal sum-rate constraint  \eqref{eqn:R12Gauss} for $\rho=1/5$.  As evidenced by the figure, the reader will note that the simplified sum-rate lower bound \eqref{eqn:gaussHGR} provides a strikingly tight approximation to \eqref{eqn:R12Gauss}.

In Figure \ref{fig:gauss_p8}, we consider more highly correlated sources with $\rho=4/5$.  As the reader will notice, the accuracy with which \eqref{eqn:gaussHGR} approximates \eqref{eqn:R12Gauss} worsens as $D_XD_Y$ becomes small.  This is to be expected since \eqref{eqn:gaussHGR} was obtained by considering  hyperplanes which support the rate-distortion region when one rate is zero (i.e., in the low-resolution regime).  This situation can be remedied in part by recalling known results for source coding in the high-resolution regime (cf. \cite[Equation (2c)]{bib:ZamirBerger1999}):
\begin{align}
R_X+R_Y & \geq \frac{1}{2}\log\left(\frac{1-\rho^2}{D_XD_Y}\right). \label{eqn:gaussWZ}
\end{align}
Taking the maximum of  \eqref{eqn:gaussHGR} and \eqref{eqn:gaussWZ}  then yields a fairly accurate approximation of \eqref{eqn:R12Gauss}.  The reader should note that \eqref{eqn:gaussWZ} coincides with  the so-called \emph{cooperative lower bound}, in which we assume that both sources are known to a single encoder.  As shown in Figure \ref{fig:gauss_p8}, \eqref{eqn:gaussHGR} can significantly outperform the cooperative lower bound.

\begin{figure}
\centering
\vspace{-0pt}
\includegraphics[trim = 15mm 10mm 10mm 10mm, clip=true,  width=0.47 \textwidth]{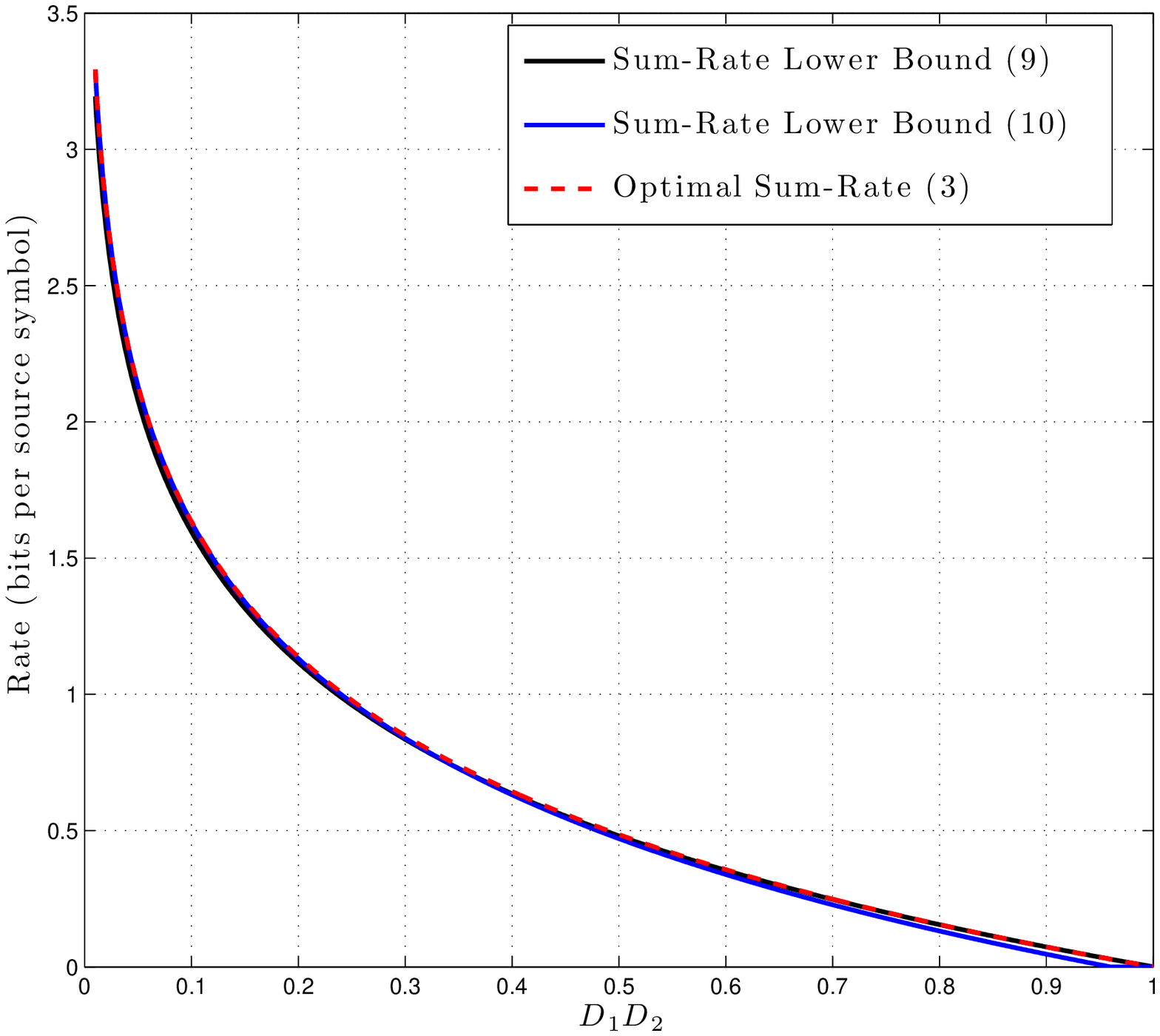}
\caption{Comparison of Eqns. \eqref{eqn:R12Gauss}, \eqref{eqn:gaussHGR}, and \eqref{eqn:gaussWZ} for $\rho=1/5$. }\label{fig:gauss_p2}
\end{figure}

Admittedly, our derivation of \eqref{eqn:gaussHGR} was ad-hoc and required necessity of \eqref{eqn:R1Gauss} and \eqref{eqn:R2Gauss}, which was established by Oohama in \cite{bib:Oohama1997} many years after the multiterminal source coding problem was posed.  Thus, it is desirable to establish a generalization of  \eqref{eqn:HGR_R1} and \eqref{eqn:HGR_R2} to arbitrary sources and distortion measures which does not require known converse results for the specific problem instance under consideration.  This generalization is precisely what we prove in this paper.

Section \ref{sec:DefnsResults} delivers our main result.  Two alternate proofs are given in Section \ref{sec:proofs}, along with a brief discussion.  Section \ref{sec:conc} summarizes our conclusions.  

\section{Definitions and Main Result}\label{sec:DefnsResults}
Throughout this section, let $X,Y$ be random variables with given joint distribution $P_{XY}$.  Let $P_X$ and $P_Y$ denote the marginal distributions of $X$ and $Y$, respectively.  To avoid technicalities, we will assume $\max\{|\mathcal{X}|,|\mathcal{Y}|\}<\infty$.  Without loss of generality, assume $P_X(x)>0$ for all $x\in \mathcal{X}$ and $P_Y(y)>0$ for all $y\in \mathcal{Y}$.

\begin{definition}
Define
\begin{align}
s^*(X;Y) = \sup_{Q_X \neq P_X} \frac{D(Q_Y \| P_Y)}{D(Q_X \| P_X)},
\end{align}
where $Q_Y$ denotes the $y$-marginal distribution of $Q_{XY}=Q_X P_{Y|X}$, and the supremum is over all probability distributions $Q_X$ on $\mathcal{X}$ not identical to $P_X$.
\end{definition}

We remark that $s^*(X;Y) \in [0,1]$ as a consequence of the data processing inequality for relative entropy.%

\begin{figure}
\centering
\includegraphics[trim = 15mm 10mm 10mm 10mm, clip=true, width=0.47 \textwidth]{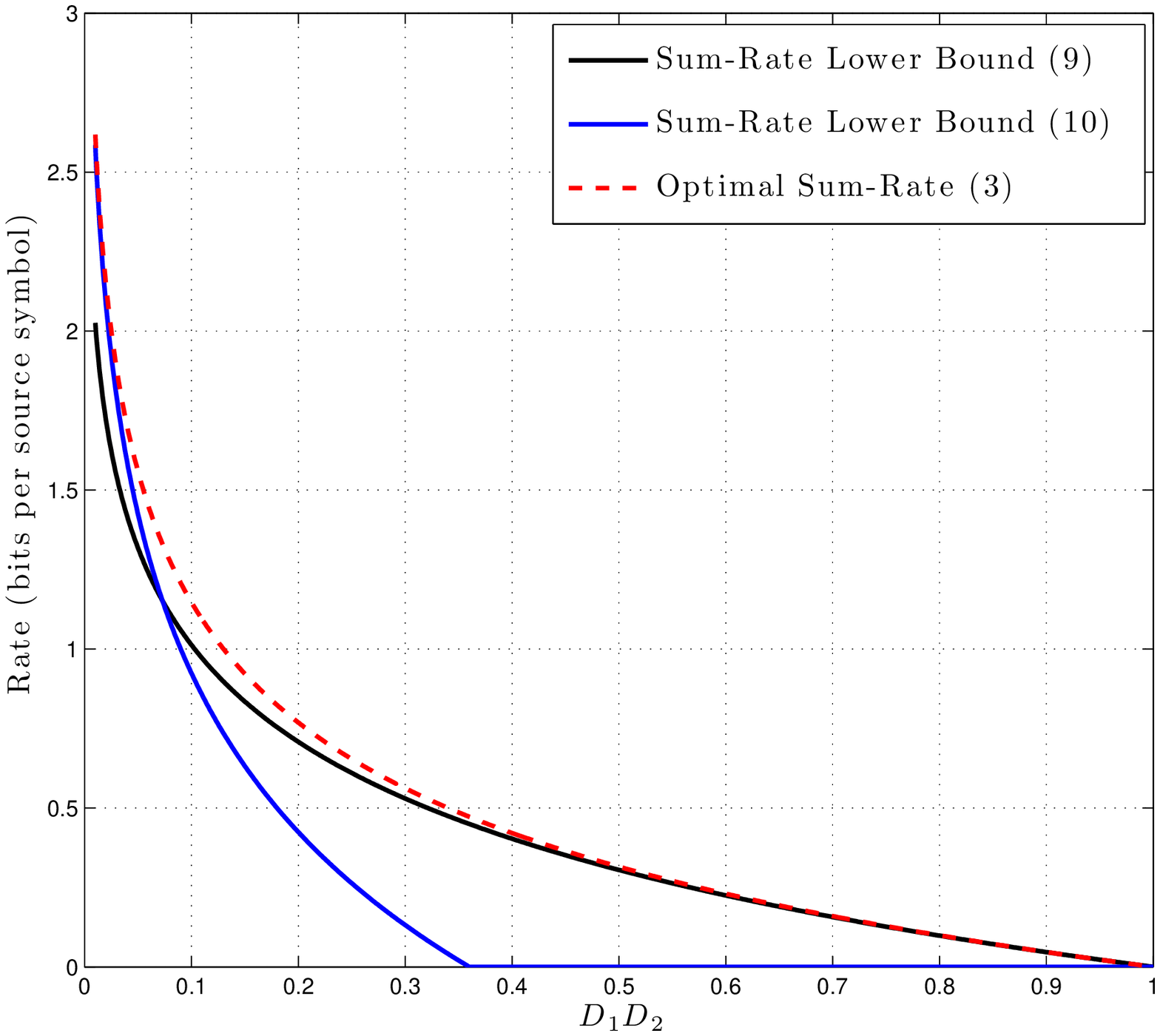}

\caption{Comparison of Eqns. \eqref{eqn:R12Gauss}, \eqref{eqn:gaussHGR}, and \eqref{eqn:gaussWZ} for $\rho=4/5$.}\label{fig:gauss_p8}
\end{figure}

\begin{definition}\label{def:RD}
For a random variable $X$ with alphabet $\mathcal{X}$, a reproduction alphabet $\hat{\mathcal{X}}$, and a distortion function $d_x:\mathcal{X}\times \hat{\mathcal{X}}\rightarrow [0,\infty)$, let $\RR_X(D_X)$ denote the corresponding rate distortion function.  That is,
\begin{align}
\RR_X(D_X) = \min_{p(\hat{x}|x) : \mathbb{E}\left[d_x(X,\hat{X})\right] \leq D_X}I(X;\hat{X}).
\end{align}
\end{definition}
\begin{definition}
Assume $\{X_i,Y_i\}_{i=1}^{\infty}$ is a 2-DMS with joint distribution $P_{XY}$. A rate distortion tuple $(R_X,R_Y,D_X,D_Y)$ is achievable for distortion functions $d_x,d_y$ if, for any $\epsilon>0$, there exists an integer $n$, encoding functions 
\begin{align}
f_x:\mathcal{X}^n &\rightarrow \{1,\dots,2^{nR_X}\}\\
f_y:\mathcal{Y}^n &\rightarrow \{1,\dots,2^{nR_Y}\},
\end{align}
and decoding functions 
\begin{align}
\phi_x:\{1,\dots,2^{nR_X}\}\times \{1,\dots,2^{nR_Y}\}\rightarrow \hat{\mathcal{X}}^n\\
\phi_y:\{1,\dots,2^{nR_X}\}\times \{1,\dots,2^{nR_Y}\}\rightarrow \hat{\mathcal{Y}}^n
\end{align}
which satisfy 
\begin{align}
\mathbb{E}\left[d_x(X^n,\phi_x(f_x(X^n),f_y(Y^n)))\right]&\leq D_X+\epsilon\label{d1}\\
\mathbb{E}\left[d_y(Y^n,\phi_y(f_x(X^n),f_y(Y^n)))\right]&\leq D_Y+\epsilon.\label{d2}
\end{align}
\end{definition}
We remark that distortion between two sequences is defined as the average per-symbol distortion (as usual).
\begin{theorem}\label{thm:HGRbound}
Suppose $(R_X,R_Y,D_X,D_Y)$ is an achievable rate distortion tuple for distortion functions $d_x,d_y$.  Then
\begin{align}
R_X+s^*(Y;X) R_Y &\geq \RR_X(D_X) \label{eqn:thm1}\\
R_Y+s^*(X;Y) R_X &\geq \RR_Y(D_Y).
\end{align}
\end{theorem}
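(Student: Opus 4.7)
The plan is to prove the first inequality by sandwiching the mutual information $I(X^n; J, K)$ between a rate-distortion lower bound and a strong-data-processing upper bound; the second inequality then follows by symmetry, swapping the roles of $X$ and $Y$.

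Write $J = f_x(X^n)$, $K = f_y(Y^n)$, and $\hat{X}^n = \phi_x(J,K)$. For the lower bound, I would apply the textbook single-letter converse for lossy compression: since $X^n$ is i.i.d., the data-processing inequality, the single-letter characterization of $\RR_X$, and convexity/monotonicity of $\RR_X$ together give
\begin{align*}
I(X^n; J, K) \geq I(X^n; \hat{X}^n) \geq \sum_{i=1}^n I(X_i; \hat{X}_i) \geq n\,\RR_X(D_X + \epsilon).
\end{align*}
For the upper bound, I would decompose $I(X^n; J, K) = I(X^n; K) + I(X^n; J \mid K) \leq I(X^n; K) + H(J) \leq I(X^n; K) + nR_X$, so that the task reduces to showing $I(X^n; K) \leq s^*(Y;X)\,nR_Y$.

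This is where the strong data processing inequality enters. Since $K$ is a function of $Y^n$, the Markov chain $K - Y^n - X^n$ holds, and the variational definition of $s^*$ lifts to mutual information in the standard way: for each realization $k$, viewing $P_{X^n\mid Y^n}$ as a channel yields $D(P_{X^n \mid K=k}\,\|\,P_{X^n}) \leq s^*(Y^n; X^n)\,D(P_{Y^n \mid K=k}\,\|\,P_{Y^n})$, and averaging over $k$ gives $I(X^n; K) \leq s^*(Y^n; X^n)\,I(Y^n; K) \leq s^*(Y^n; X^n)\,nR_Y$. The main obstacle is the \emph{tensorization} identity $s^*(Y^n; X^n) = s^*(Y; X)$ for the product source; this is the classical Ahlswede--G\'acs tensorization of the KL-divergence SDPI constant, which I would either cite directly or prove by a short induction on $n$ using the chain rule for relative entropy and the single-letter definition of $s^*$.

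Assembling the two bounds gives $n\,\RR_X(D_X + \epsilon) \leq nR_X + s^*(Y;X)\, nR_Y$; dividing by $n$ and letting $\epsilon \to 0$, while invoking right-continuity of $\RR_X$, yields the first inequality of the theorem. The second inequality is obtained by the identical argument after interchanging the roles of $(X, J)$ and $(Y, K)$.
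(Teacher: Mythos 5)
Your proposal is correct and follows essentially the same route as the paper's direct proof: decompose the mutual information involving both messages, bound the cross term $I(X^n;K)$ via the strong data processing inequality for the Markov chain $K \leftrightarrow Y^n \leftrightarrow X^n$ together with tensorization of $s^*$, and single-letterize with the standard rate-distortion converse. The only cosmetic differences are that you target Theorem \ref{thm:HGRbound} directly rather than the stronger Theorem \ref{thm:GeneralHGROB}, and you re-derive the mutual-information form of the SDPI from the divergence definition instead of citing Lemma \ref{lem:rhoIneq}.
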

Let $\rho^*(X,Y) \triangleq \max\{ s^*(X;Y),s^*(Y;X)\}$. An immediate corollary of Theorem \ref{thm:HGRbound} is the sum-rate lower bound
\begin{align}
R_X+R_Y \geq \frac{1}{1+\rho^*(X;Y)}\left(\RR_X(D_X)+\RR_Y(D_Y) \right). \label{eqn:HGRgeneral}
\end{align}

We remark that if $X,Y$ are jointly Gaussian with correlation coefficient $\rho$, we have that $\rho^2=\rho^*(X;Y)$ (upon extending the definition of $s^*(X;Y)$ to continuous distributions).  This can be shown by invoking the entropy power inequality in a manner similar to \cite[Section IV-D]{bib:ErkipCover98}.  Thus, Theorem \ref{thm:HGRbound}   generalizes the bounds \eqref{eqn:HGR_R1} and \eqref{eqn:HGR_R2} to any choice of  sources and distortion measures as desired, and \eqref{eqn:HGRgeneral} generalizes \eqref{eqn:gaussHGR}.
\subsection{Discussion}
Roughly speaking, Theorem \ref{thm:HGRbound} implies that, as long as $X,Y$ are not highly correlated under the measures $s^*(X;Y)$ and $s^*(Y;X)$,  compressing with an optimal scheme provides little savings in attainable sum-rate over treating the sources as if they were independent.  For example, consider quaternary sources with joint distribution given by 
\begin{align}
p_{X,Y}(x,y)=\left\{ 
\begin{array}{ll}
1/10 &\mbox{if $x=y$}\\
1/20 &\mbox{if $x\neq y$.}
\end{array}
\right.
\end{align}
By applying the branch and bound algorithm in \cite{bib:Benson2006}, we can compute $\rho^*(X;Y) \approx 0.045$.  Hence, \eqref{eqn:HGRgeneral} implies that separate encoding of $X$ and $Y$ at rates $\RR_X(D_X)$ and $\RR_Y(D_Y)$ incurs at most a $4.3\%$ penalty in sum-rate over an optimal scheme \emph{regardless of which distortion measures are considered}. It does not appear  one can easily make such a claim using previously known results.

Theorem \ref{thm:HGRbound}  has a certain intuitive appeal since it explicitly relates the multiterminal source coding problem to the individual rate distortion functions coupled via the correlation measures $s^*(X;Y)$ and $s^*(Y;X)$.  This tradeoff between correlation and achievable rate-distortion tuples is  obscured in the well-known Berger-Tung outer bound due to its use of auxiliary random variables which often have no physical interpretation (due to the Marokov conditions they satisfy).

Although the Gaussian and quaternary examples we have discussed may give the impression that \eqref{eqn:HGRgeneral} is nearly tight, we point out that this is not always the case.  Indeed, one can devise examples such as $X\sim\mbox{Bernoulli}(1/2)$, $Y=X$ a.s., $d_y\equiv 0$, and $d_x$ equal to Hamming distortion.  In this case  \eqref{eqn:HGRgeneral} is suboptimal by a factor of $2$, however \eqref{eqn:thm1} {is} tight in this case.  Setting aside contrived examples, we believe that Theorem \ref{thm:HGRbound} will give useful bounds for many practical settings of interest (e.g., sensor networks, binaural recording, etc.).

\section{Two Proofs of Theorem \ref{thm:HGRbound}}\label{sec:proofs}
In lieu of proving Theorem \ref{thm:HGRbound}, we shall  prove the stronger result\footnote{Like Theorem \ref{thm:HGRbound}, the outer bound given by Theorem \ref{thm:GeneralHGROB} is efficiently computable.}:
\begin{theorem}\label{thm:GeneralHGROB}
Suppose $(R_X,R_Y,D_X,D_Y)$ is an achievable rate distortion tuple for distortion functions $d_x,d_y$.  Then
\begin{align}
R_X+s^*(Y;X)R_Y &\geq I(X;\hat{X},\hat{Y})\\
R_Y+s^*(X;Y)R_X &\geq I(Y;\hat{X},\hat{Y})
\end{align}
for some conditional distribution $P_{\hat{X},\hat{Y}|X,Y}$ satisfying
\begin{align}
\mathbb{E} d_x(X,\hat{X}) &\leq D_X\\
\mathbb{E} d_y(Y,\hat{Y}) &\leq D_Y.
\end{align}
\end{theorem}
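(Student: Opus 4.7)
The plan is to run a standard single-letter converse built around two tools: (i) the mutual-information form of the strong data processing inequality (SDPI), namely that if $U - X - Y$ is a Markov chain then $I(U;Y) \leq s^*(X;Y)\,I(U;X)$; and (ii) the tensorization identity $s^*(X^n;Y^n) = s^*(X;Y)$ under the product distribution $P_{XY}^n$. With these in hand, I will bound $I(X^n; M_X, M_Y)$ by $nR_X + s^*(Y;X)\,nR_Y$ (and symmetrically for $Y^n$), and then single-letterize via a time-sharing variable.

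Fix $\epsilon > 0$ and use an achievability code to define $M_X = f_x(X^n)$, $M_Y = f_y(Y^n)$, $\hat{X}^n = \phi_x(M_X, M_Y)$, $\hat{Y}^n = \phi_y(M_X, M_Y)$, satisfying \eqref{d1} and \eqref{d2}. The key step is the decomposition
\begin{align*}
I(X^n; M_X, M_Y) = I(X^n; M_Y) + I(X^n; M_X \mid M_Y),
\end{align*}
in which the second term is at most $H(M_X) \leq nR_X$. For the first, the Markov chain $M_Y - Y^n - X^n$ holds because $M_Y$ is a function of $Y^n$ alone, so the SDPI and tensorization give
\begin{align*}
I(X^n; M_Y) \leq s^*(Y^n; X^n)\,I(Y^n; M_Y) = s^*(Y; X)\,I(Y^n; M_Y) \leq s^*(Y; X)\,nR_Y.
\end{align*}
Data processing then yields $I(X^n;\hat{X}^n,\hat{Y}^n) \leq nR_X + s^*(Y;X)\,nR_Y$, and the symmetric computation (swapping $X$ and $Y$) gives $I(Y^n;\hat{X}^n,\hat{Y}^n) \leq nR_Y + s^*(X;Y)\,nR_X$.

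The single-letterization is routine: introduce $Q \sim \mathrm{Unif}\{1,\dots,n\}$ independent of everything, and set $(X, Y, \hat{X}, \hat{Y}) := (X_Q, Y_Q, \hat{X}_Q, \hat{Y}_Q)$. The chain rule, together with $(X_i)$ being i.i.d.\ and independent of $Q$, gives $I(X^n;\hat{X}^n,\hat{Y}^n) \geq n\,I(X;\hat{X},\hat{Y})$, and symmetrically for $Y$; the distortion constraints pass to the single-letter distribution because they are per-symbol averages. A standard compactness argument (finite alphabets make the set of conditional distributions $P_{\hat{X},\hat{Y}|X,Y}$ compact) then extracts a subsequential limit as $\epsilon \to 0$ that simultaneously satisfies both mutual-information inequalities and both distortion constraints.

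The main obstacle is supplying the two SDPI facts. The pointwise bound $I(U;Y) \leq s^*(X;Y)\,I(U;X)$ comes from writing each mutual information as $\mathbb{E}_U$ of a relative entropy, noting that the Markov chain $U - X - Y$ makes $P_{Y|U=u}$ the push-forward of $P_{X|U=u}$ through the channel $P_{Y|X}$, and then invoking the definition of $s^*(X;Y)$ pointwise in $u$. The tensorization identity $s^*(X^n;Y^n) = s^*(X;Y)$ under a product distribution is the real content: it is classical (dating to work of Ahlswede--G\'acs and subsequent authors), but verifying or citing it carefully is essential, since it is what allows the single-letter quantity $s^*(Y;X)$ to appear in a bound whose blocklength is arbitrary. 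Once those two facts are in place, the remainder of the argument is bookkeeping.
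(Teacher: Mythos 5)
Your proposal is correct and is essentially the paper's first (direct) proof: the decomposition $I(X^n;M_X,M_Y)=I(X^n;M_Y)+I(X^n;M_X\mid M_Y)$ is exactly the paper's chain $nR_X\ge I(X^n;M_X\mid M_Y)=I(X^n;M_X,M_Y)-I(X^n;M_Y)$, followed by the same application of the strong DPI to the Markov chain $M_Y \leftrightarrow Y^n \leftrightarrow X^n$, the same tensorization of $s^*$, and an equivalent single-letterization (your time-sharing variable $Q$ plays the role of the paper's averaged conditional distribution). No gaps; the only cosmetic difference is the order in which the inequalities are arranged.
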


Clearly, Theorem \ref{thm:HGRbound} follows immediately from Theorem \ref{thm:GeneralHGROB} and Definition \ref{def:RD}.    As noted previously,  we shall assume $\max\{|\mathcal{X}|,|\mathcal{Y}|\}<\infty$ to avoid technicalities.

We give two different proofs of Theorem \ref{thm:GeneralHGROB}.  Both arguments rely on the following  ``strong" data processing lemma.
\begin{lemma}[See {\cite{bib:AnanthramEtAl13}}]\label{lem:rhoIneq}
If $U\leftrightarrow X \leftrightarrow Y$ form a Markov chain in that order, then 
\begin{align}
I(Y;U)\leq s^*(X;Y) I(X;U). \label{strongDPI}
\end{align}
\end{lemma}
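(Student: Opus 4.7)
The plan is to prove Lemma \ref{lem:rhoIneq} by expanding each mutual information as an average of divergences from a conditional distribution to its marginal, and then applying the definition of $s^*(X;Y)$ pointwise for each realization of the conditioning variable $U$.

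First, I would write
\begin{align}
I(X;U) &= \sum_u P_U(u)\, D\bigl(P_{X|U=u}\,\|\,P_X\bigr),\\
I(Y;U) &= \sum_u P_U(u)\, D\bigl(P_{Y|U=u}\,\|\,P_Y\bigr),
\end{align}
which is the standard ``golden formula'' representation of mutual information. The key structural observation comes next: because $U \leftrightarrow X \leftrightarrow Y$ is a Markov chain, the conditional distribution $P_{Y|U=u}$ is exactly the $y$-marginal of $P_{X|U=u}\, P_{Y|X}$. In other words, if we take the role of $Q_X$ in the definition of $s^*(X;Y)$ to be $P_{X|U=u}$, then the induced $Q_Y$ is precisely $P_{Y|U=u}$.

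Consequently, for each fixed $u$ with $P_{X|U=u} \neq P_X$, the definition of $s^*(X;Y)$ as a supremum over $Q_X$ gives the pointwise bound
\begin{align}
D\bigl(P_{Y|U=u}\,\|\,P_Y\bigr) \leq s^*(X;Y)\cdot D\bigl(P_{X|U=u}\,\|\,P_X\bigr).
\end{align}
For $u$ with $P_{X|U=u} = P_X$, the Markov chain immediately gives $P_{Y|U=u}=P_Y$, so both sides vanish and the inequality still holds trivially. Averaging over $P_U$ and using the two golden-formula identities then yields $I(Y;U) \leq s^*(X;Y)\, I(X;U)$, which is precisely \eqref{strongDPI}.

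There is essentially no hard obstacle in this argument; the only point requiring care is the verification that $P_{Y|U=u}$ really is obtained from $P_{X|U=u}$ through the fixed channel $P_{Y|X}$, which is exactly the content of the Markov hypothesis, together with the mild edge case $P_{X|U=u}=P_X$ handled above. Thus the lemma is a direct consequence of the definition of $s^*(X;Y)$ combined with the linearity of expectation.
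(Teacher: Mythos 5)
Your argument is correct, and it is worth noting that the paper itself does not prove Lemma \ref{lem:rhoIneq} at all --- it simply cites \cite{bib:AnanthramEtAl13} --- so your proposal supplies a self-contained proof where the paper defers to a reference. The route you take (write $I(X;U)=\sum_u P_U(u)\,D(P_{X|U=u}\|P_X)$ and $I(Y;U)=\sum_u P_U(u)\,D(P_{Y|U=u}\|P_Y)$, observe that the Markov condition $U\leftrightarrow X\leftrightarrow Y$ makes $P_{Y|U=u}$ exactly the output of the fixed channel $P_{Y|X}$ driven by $P_{X|U=u}$, apply the defining supremum of $s^*(X;Y)$ to each term, and average) is precisely the standard argument in the cited work showing that the divergence-ratio constant dominates the mutual-information ratio. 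Your handling of the edge cases is also sound: when $P_{X|U=u}=P_X$ both divergences vanish, and otherwise the paper's standing assumption that $P_X$ has full support guarantees $0<D(P_{X|U=u}\|P_X)<\infty$, so the ratio bound extracted from the supremum (which need not be attained --- you only use it as an upper bound) is legitimate. In the paper's setting of finite alphabets and discrete $U$ (in its applications $U$ is a function of $X^n$), nothing further is needed, so the lemma is fully established by your argument.
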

\begin{remark}
The constant $s^*(X;Y)$ in \eqref{strongDPI} is tight.  Until very recently, it was mistakenly believed that \eqref{strongDPI} held with $s^*(X;Y)$ replaced by $\rho_m^2(X;Y)$  -- the squared Hirschfeld-Gebelein-R\'{e}nyi maximal correlation between $X$ and $Y$ (see 
\cite{bib:ErkipCover98}).  However, it was recently shown in \cite{bib:AnanthramEtAl13} that the correct constant is $s^*(X;Y)$.  We refer the reader to \cite{bib:AnanthramEtAl13} for a detailed discussion.
\end{remark}
\subsection{A Direct Proof of Theorem \ref{thm:GeneralHGROB}}
\begin{proof}[First Proof of Theorem \ref{thm:GeneralHGROB}]
Fix $\epsilon>0$.  Since $(R_X,R_Y,D_X,D_Y)$ is achievable, there exists a $(2^{nR_X},2^{nR_Y},n)$ code $(f_x,f_y,\phi_x,\phi_y)$ which satisfies \eqref{d1} and \eqref{d2}.  In order to simplify notation, we write $\hat{X}^n =  \phi_x(f_x(X^n),f_y(Y^n)))$ and $\hat{Y}^n=\phi_y(f_x(X^n),f_y(Y^n)))$.

With this notation, observe that
\begin{align}
nR_Y &\geq H(f_y(Y^n)) \\
&\geq I(Y^n;f_y(Y^n)|f_x(X^n))\\
&= I(Y^n;f_y(Y^n),f_x(X^n))-I(Y^n;f_x(X^n))\\
&\geq I(Y^n;f_y(Y^n),f_x(X^n))\notag\\
&\quad-s^*(X^n;Y^n)I(X^n;f_x(X^n)) \label{rhoLine}\\
&\geq I(Y^n;\hat{Y}^n,\hat{X}^n)-s^*(X^n;Y^n)nR_X\label{RDline}\\
&= \sum_{i=1}^n I(Y_i;\hat{Y}^n,\hat{X}^n|Y^{i-1})-s^*(X^n;Y^n)nR_X\\
&\geq \sum_{i=1}^n I(Y_i;\hat{Y}_i,\hat{X}_i) -s^*(X^n;Y^n)nR_X \label{eqn:memoryless}\\
&= \sum_{i=1}^n I(Y_i;\hat{Y}_i,\hat{X}_i)-s^*(X;Y)nR_X.\label{singLetter}
\end{align}
In the above string of inequalities, 
\begin{itemize}
\item \eqref{rhoLine} is a consequence of Lemma \ref{lem:rhoIneq} since $f_x(X^n)\leftrightarrow X^n \leftrightarrow Y^n$.
\item \eqref{RDline} follows from  the data processing inequality and the fact that $I(X^n;f_x(X^n)) \leq nR_X$.
\item \eqref{eqn:memoryless} follows by the memoryless property of the source and monotonicity of mutual information.
\item \eqref{singLetter} follows by the tensorization property of $s^*(X^n;Y^n)$ for memoryless sources.  That is, $s^*(X;Y)=s^*(X^n;Y^n)$ (See \cite{bib:AnanthramEtAl13}).
\end{itemize}
Define
\begin{align*}
p(\hat{x},\hat{y}|x,y) = \frac{1}{n}\sum_{i=1}^n \Pr\left(\hat{X}_i=\hat{x},\hat{Y}_i=\hat{y}|X_i=x,Y_i=y\right).
\end{align*}
By linearity of expectation, we have
\begin{align*}
\mathbb{E} d_y(Y,\hat{Y}) &= \mathbb{E}\left[d_y(Y^n,\phi_y(f_x(X^n),f_y(Y^n)))\right]\leq D_Y+\epsilon.
\end{align*}
Since $(X_i,Y_i)$ are identically distributed for all $i$, convexity of mutual information in the conditional distribution implies the desired inequality
\begin{align}
\frac{1}{n}\sum_{i=1}^n I(Y_i;\hat{X}_i,\hat{Y}_i) &\geq I(Y;\hat{X},\hat{Y}).
\end{align}
A symmetric argument completes the proof.
\end{proof}

\subsection{A Proof of Theorem \ref{thm:GeneralHGROB} via Logarithmic Loss}

Interestingly, Theorem \ref{thm:GeneralHGROB} can also be derived from the recent results on source coding under logarithmic loss \cite{bib:CourtadeWeissmanISIT2012}.  This suggests that logarithmic loss may be useful in obtaining other converse results, which are stronger than Theorem \ref{thm:GeneralHGROB}.

Let $\mathcal{M}(\mathcal{X})$ denote the set of probability measures on $\mathcal{X}$.  For $\hat{x}^{(LL)}\in \mathcal{M}(\mathcal{X})$, the logarithmic loss function $d_{LL}:\mathcal{X}\times \mathcal{M}(\mathcal{X})\rightarrow \mathbb{R}$ is defined by
\begin{align}
d_{LL}(x,\hat{x}^{(LL)}) = \log\frac{1}{\hat{x}^{(LL)}(x)},
\end{align}
where $\hat{x}^{(LL)}(x)$ is the probability $\hat{x}^{(LL)}$ assigns to the outcome $x\in \mathcal{X}$.  When $d_x$ and $d_y$ are both logarithmic loss distortion measures (defined for their respective source alphabets $\mathcal{X}$ and $\mathcal{Y}$), the rate distortion region is known.  The characterization of this region is given by the following theorem, which is proved in \cite{bib:CourtadeWeissmanISIT2012}.
\begin{theorem}\label{thm:LL}
$(R_X,R_Y,D_X,D_Y)$ is achievable  under logarithmic loss if and only if
\begin{align}
R_X &\geq I(X;U_X|U_Y,Q) \\
R_Y &\geq I(Y;U_Y|U_X,Q)\\
R_X+R_Y &\geq I(X,Y;U_X,U_Y|Q)\\
D_X &\geq H(X|U_X,U_Y,Q) \\
D_Y &\geq H(Y|U_X,U_Y,Q) 
\end{align}
for some joint distribution of the form $p(x,y)p(q)p(u_X|x,q)p(u_Y|y,q)$ with $|\mathcal{U}_X|\leq |\mathcal{X}|$, $|\mathcal{U}_Y|\leq |\mathcal{Y}|$, and $|\mathcal{Q}|\leq 5$.
\end{theorem}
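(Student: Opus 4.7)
The plan is to establish Theorem \ref{thm:LL} by two standard arguments: achievability via a Berger-Tung scheme equipped with a tailored log-loss decoder, and a converse exploiting the structural fact that the optimal log-loss reconstruction is posterior inference. For achievability, fix any joint distribution $p(x,y)p(q)p(u_X|x,q)p(u_Y|y,q)$ meeting the distortion inequalities. Standard Berger-Tung source coding with the given rate triple produces at the decoder, for each $i$, auxiliary codewords $(U_{X,i}, U_{Y,i})$ jointly typical with $(X_i, Y_i)$ under time-sharing $Q_i$. The decoder then sets $\hat{X}_i^{(LL)}(\cdot) := P(X_i = \cdot \mid U_{X,i}, U_{Y,i}, Q_i)$ and likewise $\hat{Y}_i^{(LL)}$. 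A direct calculation shows the expected log-loss equals $H(X \mid U_X, U_Y, Q)$ and $H(Y \mid U_X, U_Y, Q)$, which are at most $D_X$ and $D_Y$ by assumption.

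For the converse, the key observation is that for any reconstruction $\hat{X}_i^{(LL)}$ that is a deterministic function of the messages $M_X := f_x(X^n)$ and $M_Y := f_y(Y^n)$, the Gibbs inequality gives
\begin{align*}
n\,\mathbb{E}\,d_{LL}(X^n, \hat{X}^{n,(LL)}) \geq H(X^n \mid M_X, M_Y),
\end{align*}
and symmetrically for $Y$, so that $n(D_X+\epsilon) \geq H(X^n \mid M_X, M_Y)$ and $n(D_Y+\epsilon) \geq H(Y^n \mid M_X, M_Y)$. I would then follow the Berger-Tung converse template: introduce a time-sharing index $J$ uniform on $\{1,\ldots,n\}$ together with judiciously chosen side information (for instance $Q := (J, X^{J-1}, Y_{J+1}^{n})$) so that the auxiliaries $U_X := M_X$, $U_Y := M_Y$, $X := X_J$, $Y := Y_J$ inherit the Markov structure $p(x,y)p(q)p(u_X|x,q)p(u_Y|y,q)$, with the verification relying on the i.i.d.\ property of the source. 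The individual and joint rate bounds $nR_X \geq H(M_X \mid M_Y)$, $nR_Y \geq H(M_Y \mid M_X)$, and $n(R_X+R_Y)\geq H(M_X,M_Y)$ then single-letterize to $I(X;U_X \mid U_Y,Q)$, $I(Y;U_Y\mid U_X,Q)$, and $I(X,Y;U_X,U_Y\mid Q)$ via the Csisz\'ar sum identity, while the distortion bounds collapse to $H(X\mid U_X,U_Y,Q)$ and $H(Y\mid U_X,U_Y,Q)$.

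The hardest step will be choosing the conditioning in $Q$ so that the product-form joint and the correct single-letter rate expressions emerge simultaneously: with too little conditioning, $U_X$ and $U_Y$ share dependence through the correlated tail of the source and the requisite conditional independence fails, while with too much conditioning the single-letter rate expressions inflate. A secondary technical obstacle is the cardinality reduction $|\mathcal{U}_X|\leq|\mathcal{X}|$, $|\mathcal{U}_Y|\leq|\mathcal{Y}|$, $|\mathcal{Q}|\leq 5$, which is a standard but delicate application of the Fenchel-Eggleston sharpening of Carath\'eodory's theorem that must preserve all five rate-distortion functionals together with the source marginal $p(x,y)$.
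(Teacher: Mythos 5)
First, a point of orientation: the paper does not prove Theorem~\ref{thm:LL} at all --- it imports the statement wholesale from \cite{bib:CourtadeWeissmanISIT2012}. So there is no in-paper proof to match; your proposal has to stand on its own against the argument in that reference.

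Your achievability half is essentially the standard one and is fine: Berger--Tung coding followed by the posterior reconstruction $\hat{X}^{(LL)}_i(\cdot)=P(X_i=\cdot\mid U_{X,i},U_{Y,i},Q_i)$, whose expected log loss is (up to the usual typicality caveats) $H(X\mid U_X,U_Y,Q)$. Your starting point for the converse is also the right one: the Gibbs/cross-entropy inequality $n\,\mathbb{E}\,d_{LL}(X^n,\hat{X}^{n,(LL)})\geq H(X^n\mid M_X,M_Y)$ is exactly Lemma~1 of the reference.

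The gap is in the step you flag as ``the hardest'' and then resolve by fiat. With $U_X:=M_X$, $U_Y:=M_Y$, $Q:=(J,X^{J-1},Y_{J+1}^n)$, the product form $p(q)p(x,y)p(u_X|x,q)p(u_Y|y,q)$ does in fact hold (by independence across time of the i.i.d.\ source, the two messages become conditionally independent given $(X_J,Y_J,Q)$ with the required one-sided dependencies). What fails is the single-letterization of the \emph{individual rate} constraints. You have $nR_X\geq H(M_X\mid M_Y)\geq\sum_i I(X_i;M_X\mid M_Y,X^{i-1})$, but the target is $\sum_i I(X_i;M_X\mid M_Y,X^{i-1},Y_{i+1}^n)$, and inserting $Y_{i+1}^n$ into the conditioning changes the mutual information with indeterminate sign; the Csisz\'ar sum identity does not rescue this because the residual terms $I(X_i;Y_{i+1}^n\mid M_Y,X^{i-1})$ and $I(X_i;Y_{i+1}^n\mid M_X,M_Y,X^{i-1})$ do not cancel or compare. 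This is precisely the obstruction that has kept the general two-encoder converse open for decades, so ``follow the Berger--Tung converse template'' cannot be the whole story. The actual converse in \cite{bib:CourtadeWeissmanISIT2012} does not extract auxiliaries satisfying the Markov structure directly from the code in this way; it exploits the identity $D\geq\frac{1}{n}H(X^n\mid M_X,M_Y)$ to reduce everything to entropy quantities, solves the CEO problem under logarithmic loss first, and then shows that every supporting hyperplane of the $n$-letter operational region is dominated by the single-letter Berger--Tung region --- a genuinely different (and substantially harder) mechanism than the one you sketch. As written, your converse would not go through.
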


\begin{proof}[Second Proof of Theorem \ref{thm:GeneralHGROB}]
Since $(R_X,R_Y,D_X,D_Y)$ is achievable, there exists a $(2^{nR_X},2^{nR_Y},n)$ code $(f_x,f_y,\phi_x,\phi_y)$ which satisfies \eqref{d1} and \eqref{d2}. By considering the logarithmic loss reproductions  
\begin{align}
\hat{X}^{(LL)}_i&=\Pr[X_i=x |   {f}_x(X^n),f_y(Y^n)]\\
\hat{Y}^{(LL)}_i&=\Pr[Y_i=y |   {f}_x(X^n),f_y(Y^n)]
\end{align}
 for each index $i=1,2,\dots,n$, Theorem \ref{thm:LL} guarantees the existence of a  joint distribution $p(x,y)p(q)p(u_X|x,q)p(u_Y|y,q)$ with $|\mathcal{U}_X|\leq |\mathcal{X}|$, $|\mathcal{U}_Y|\leq |\mathcal{Y}|$, and $|\mathcal{Q}|\leq 5$ which satisfies\footnote{Establishing the equality in the sum-rate constraint is straightforward.}:
\begin{align}
R_X &\geq I(X;U_X|U_Y,Q) \\
R_Y &\geq I(Y;U_Y|U_X,Q)\\
R_X+R_Y &= I(X,Y;U_X,U_Y|Q)\\
\frac{1}{n}\sum_{i=1}^n H(X_i|  {f}_x(X^n),f_y(Y^n)) &\geq H(X|U_X,U_Y,Q) \label{eqn:LLdist1}\\
\frac{1}{n}\sum_{i=1}^n H(Y_i|  {f}_x(X^n),f_y(Y^n)) &\geq H(Y|U_X,U_Y,Q) .
\end{align}
We now make several observations, from which the claim follows easily.

First, note that \eqref{eqn:LLdist1} is equivalent to
\begin{align}
I(X;U_X,U_Y|Q) &\geq\frac{1}{n}\sum_{i=1}^n I(X_i;  {f}_x(X^n),f_y(Y^n)).  \label{eqn:auxFunc}
\end{align}
Second, since $R_X+R_Y=I(X,Y;U_X,U_Y|Q)$ and $R_X\geq I(X;U_X|U_Y,Q)$, we have
\begin{align}
R_Y &= I(X,Y;U_X,U_Y|Q)-R_X\\
&= I(Y;U_Y|Q)-(R_X-I(X;U_X|U_Y,Q))\\
&\leq I(Y;U_Y|Q). \label{R2upperbound}
\end{align}
Third, we observe that
\begin{align}
&R_X+R_Y = I(X,Y;U_X,U_Y|Q) \\
&= I(X;U_X,U_Y|Q) +I(Y;U_X,U_Y|X,Q) \\
&= I(X;U_X,U_Y|Q) +I(Y;U_Y|Q)-I(X;U_Y|Q)\\
&\geq I(X;U_X,U_Y|Q) +I(Y;U_Y|Q)\notag\\
&\quad-s^*(Y;X) I(Y;U_Y|Q) \label{useDeriv}\\
&=I(X;U_X,U_Y|Q) +(1-s^*(Y;X)) I(Y;U_Y|Q)\\
&\geq I(X;U_X,U_Y|Q) +(1-s^*(Y;X))R_Y, \label{useR2UB}
\end{align}
where \eqref{useDeriv} follows from Lemma \ref{lem:rhoIneq}, and \eqref{useR2UB} follows from \eqref{R2upperbound} and the fact that $s^*(Y;X)\leq 1$.

We rearrange \eqref{useR2UB} and apply \eqref{eqn:auxFunc} to obtain the desired inequality:
\begin{align}
R_X+s^*(Y;X)R_Y &\geq I(X;U_X,U_Y|Q) \\
&\geq \frac{1}{n}\sum_{i=1}^n I(X_i;  {f}_x(X^n),f_y(Y^n))\\
&= \frac{1}{n}\sum_{i=1}^n I(X_i;  \hat{X}^n,\hat{Y}^n)\\
&\geq \frac{1}{n}\sum_{i=1}^n I(X_i;  \hat{X}_i,\hat{Y}_i).
\end{align}
A standard convexity argument (identical to the final step of the alternative proof) completes the argument.
\end{proof}

\subsection{Remarks}

Many applications of strong data processing inequalities begin with a single-letter characterization of the problem of interest.  However, such characterizations are unknown for most multiterminal problems.  Indeed, characterizing the rate-distortion region for the multiterminal source coding problem defined in Section \ref{sec:DefnsResults} for general distortion measures $d_x,d_y$ is a longstanding open problem.  In general, the strong data processing inequality supplied by  Lemma \ref{lem:rhoIneq} can be used in conjunction with the tensorization property of $s^*(X^n;Y^n)$ to obtain meaningful outer bounds in source coding problems without first appealing to a single-letter characterization.  

For instance, a simple sum-rate bound for the CEO problem (cf. \cite{bib:BergerZhangViswanathan1996} for a definition) can be given as follows.  Suppose the observations $(Y_1,\dots,Y_k)$ are conditionally independent given $X$, which should be reproduced at the decoder subject to a constraint on distortion measured under $d_x$.  If $(R_1,\dots,R_k,D)$ is an achievable rate-distortion vector for this CEO problem, then 
\begin{align}
\sum_{i=1}^k s^*(Y_i;X) R_i \geq \RR_X(D_X).
\end{align}

Similar ideas can be applied to non-rate-distortion settings. As an example, consider the problem of generating common randomness:

\begin{definition}
Assume $\{X_i,Y_i\}_{i=1}^{\infty}$ is a 2-DMS with joint distribution $P_{XY}$. A common randomness pair $(C,R)$ is achievable if, for any $\epsilon>0$, there exists an integer $n$, an encoding function $f_m:\mathcal{X}^n \rightarrow \{1,\dots,2^{nR}\}$,
and decoding functions 
\begin{align}
&f_1: K = f_1(X^n), \\
&f_2: K' = f_2(Y^n,f_m(X^n))
\end{align}
which satisfy 
\begin{align}
\Pr(K = K') &> 1-\epsilon \label{eqn:cond1} \\
\frac{1}{n}H(K) &> C - \epsilon \label{eqn:cond2}\\
\frac{1}{n}H(K|K') &< \epsilon. \label{eqn:cond3}
\end{align}
\end{definition}

Let $C(R)$ be the \emph{common randomness capacity}:
\begin{align}
C(R) \triangleq \sup \{ C : (C,R) \mbox{~is achievable.} \}.
\end{align}
In his Ph.D. thesis, Zhao proved the following theorem, which  bounds the maximum number of bits of randomness that can be ``unlocked" by each bit of communication  between users.  
\begin{theorem}[{\cite[Theorem 3]{bib:Zhao2011}}]\label{thm:crCapacity}
\begin{align}
\frac{C(R)}{R} \leq \frac{1}{1-s^*(X;Y)}.
\end{align}
\end{theorem}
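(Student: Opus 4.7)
The plan is to derive the inequality directly from the three defining conditions \eqref{eqn:cond1}--\eqref{eqn:cond3}, using Lemma \ref{lem:rhoIneq} together with the tensorization identity $s^*(X^n;Y^n)=s^*(X;Y)$ as the only nontrivial tools. Fix an achievable pair $(C,R)$ with $\epsilon>0$ and a corresponding $n$-length scheme; write $M=f_m(X^n)$ so that $K=f_1(X^n)$ and $K'=f_2(Y^n,M)$.

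The first step is to convert condition \eqref{eqn:cond3} into a near-equality between $H(K)$ and $I(K;K')$: since $H(K|K')<n\epsilon$, we have $I(K;K')> H(K)-n\epsilon$. Because $K'$ is a deterministic function of $(Y^n,M)$, the data processing inequality then gives $I(K;K')\leq I(K;Y^n,M)$. I would then split
\begin{align}
I(K;Y^n,M)=I(K;Y^n)+I(K;M\mid Y^n).
\end{align}

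The second step is the key use of the strong data processing inequality. Since $K$ is a deterministic function of $X^n$, we have $I(K;X^n)=H(K)$ and the Markov chain $K\leftrightarrow X^n\leftrightarrow Y^n$ holds. Applying Lemma \ref{lem:rhoIneq} and the tensorization property yields
\begin{align}
I(K;Y^n)\leq s^*(X^n;Y^n)\,I(K;X^n)=s^*(X;Y)\,H(K).
\end{align}
The second term is bounded crudely by $I(K;M\mid Y^n)\leq H(M)\leq nR$.

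Combining the three displayed bounds gives $H(K)-n\epsilon\leq s^*(X;Y)H(K)+nR$, i.e.,
\begin{align}
(1-s^*(X;Y))\,\tfrac{1}{n}H(K)\leq R+\epsilon.
\end{align}
Invoking condition \eqref{eqn:cond2} to replace $\tfrac{1}{n}H(K)$ by $C-\epsilon$ and sending $\epsilon\to 0$ delivers $(1-s^*(X;Y))C\leq R$, which rearranges to the claim. There is no real obstacle: the whole argument is a handful of lines once one recognizes that the asymptotic agreement condition $H(K|K')\approx 0$ forces $H(K)$ to appear on both sides of an inequality — on the left through itself, on the right through the strong DPI applied along $K\to X^n\to Y^n$ — and that the communication $M$ contributes at most $nR$ additional bits of dependence. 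The only subtlety is ensuring that strong DPI is invoked with the correct Markov chain, which is immediate because $K$ is a function of $X^n$ alone.
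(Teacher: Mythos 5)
Your proof is correct and follows essentially the same route as the paper's: use \eqref{eqn:cond3} to get $I(K;K')\geq H(K)-n\epsilon$, pass to $I(K;Y^n,f_m(X^n))$, control $I(K;Y^n)$ by the strong data processing inequality along $K\leftrightarrow X^n\leftrightarrow Y^n$ together with tensorization, and absorb the communication term into $nR$. If anything, your arrangement --- keeping $H(K)$ on both sides and invoking \eqref{eqn:cond2} only at the very end --- is slightly cleaner than the paper's chain, whose opening step $nR+ns^*(X;Y)C(R)\geq nR+s^*(X;Y)H(K)$ implicitly requires an upper bound on $H(K)$ that the stated achievability definition does not directly supply.
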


Zhao's original proof of Theorem \ref{thm:crCapacity}, while simple, begins with a single-letter characterization of the common randomness capacity $C(R)$, originally due to Ahlswede and Csisz\'{a}r \cite{bib:Ahlswede1998}.  By proceeding along the lines of the direct proof of Theorem \ref{thm:GeneralHGROB}, we can obtain an alternate proof of Theorem \ref{thm:crCapacity} without appealing to a single-letter characterization of $C(R)$.

\begin{remark}
Zhao's statement of Theorem \ref{thm:crCapacity} (i.e., \cite[Theorem 3]{bib:Zhao2011}) involved $\rho_m^2(X,Y)$ instead of $s^*(X;Y)$, and is therefore incorrect in light of \cite{bib:AnanthramEtAl13}.  Above, we give a corrected version.
\end{remark}

\begin{proof}[Proof of Theorem \ref{thm:crCapacity}]
Fix $\epsilon>0$ and consider a scheme which satisfies \eqref{eqn:cond1}-\eqref{eqn:cond3}, with $C = C(R)$.  Then, we have:
\begin{align}
nR &+ ns^*(X;Y)C(R) \geq nR + s^*(X;Y)H(K) \\
&\geq nR + s^*(X;Y)I(K;X^n) \\
&\geq I(f_m(X^n); X^n,K) + I(K;Y^n)\label{eqn:applyMaxCorr} \\
&\geq I(f_m(X^n); X^n,K|Y^n) + I(K;Y^n)\\
&= I(f_m(X^n),Y^n; X^n,K)\\
&\geq I(K';K)\\
&\geq n(C(R)-\epsilon),
\end{align}
where \eqref{eqn:applyMaxCorr} follows from Lemma \ref{lem:rhoIneq} and the tensorization property of $s^*(X^n;Y^n)$.
\end{proof}

\section{Conclusion}\label{sec:conc}
We give an intuitive outer bound for the multiterminal source coding problem which couples the rate distortion functions for each source and the correlation measures $s^*(X;Y),s^*(Y;X)$.  Unlike many standard outer bounds, the proposed bound is not parameterized by a continuous family of auxiliary random variables, but rather only requires evaluation of  $s^*(X;Y)$ and $s^*(Y;X)$.  Roughly speaking, our main result indicates that compressing the sources as if they were independent yields near-optimal sum-rate performance, provided the sources are sufficiently decorrelated in the sense that $\rho^*(X,Y)$ is relatively small.

\bibliographystyle{ieeetr}
\bibliography{bib/mybib}
\end{document}